\newtheorem{prop}{Proposition}
\newtheorem{lem}{Lemma}
\newtheorem{cor}{Corollary}
\theoremstyle{definition}
\theoremstyle{definition}
\newcommand{\te}{\theta_e}
\newcommand{\mA}{\mathcal{A}}
\newcommand{\bZ}{\mathbf{Z}}
\newcommand{\bC}{\mathbf{C}}
\newcommand{\mT}{\mathcal{T}}
\numberwithin{equation}{section}
\begin{document}

\title[Algebraic integrability of the classical XXZ spin chain]{Algebraic integrability of the classical XXZ spin chain with reflecting boundary conditions.  }
\author{Gus Schrader}
\address{Department of Mathematics, University of California, Berkeley, CA
94720, USA}
\maketitle

\begin{abstract}
In this paper we analyze the classical XXZ spin chain with reflecting boundaries.  We exhibit a system of log-canonical coordinates on the phase space generalizing Sklyanin's separation of variables for the periodic XXZ chain, and use these coordinates to construct action-angle variables for the system.  We also integrate the flows of the reflection Hamiltonians explicitly in terms of Riemann theta functions.  Central to our analysis is the algebraic integrability of the model.  
\end{abstract}

\section{Introduction. }
A family of integrable reflecting boundary conditions for the quantum XXZ spin chain was introduced by Sklyanin in \cite{sklyaninref}.  In our recent work \cite{gus}, we studied the Poisson geometry of the classical analogues of the quantum integrable systems constructed by Sklyanin.  We showed that such systems fit into a general framework of integrable systems on Poisson symmetric spaces $G/K$, where $(G,r)$ is a quasitriangular Poisson-Lie group and the subgroup $K$ is the fixed point set of a solution $\sigma$ of the classical reflection equation.  In this construction, the Hamiltonians of the classical XXZ chain with reflecting boundaries arise as elements of a certain Poisson commutative subalgebra of (twisted) bi-invariant functions on the loop group $LSL_2$.  

\bigskip
The goal of the present paper is to analyze the particular example of the classical XXZ spin chain with reflecting boundaries in greater detail.  As noted in \cite{sklyaninsov}, there are three fundamental problems in the analysis of integrable Hamiltonian systems.  They are:
\begin{itemize}
\item Separation of variables on the system's phase space
\item Integration of the system's equations of motion
\item Construction of the system's action-angle variables
\end{itemize}
In this work we address all three problems.  Crucial to our analysis is the so-called {\em algebraic integrability} of the system: the tori on which the flows of the reflection Hamiltonians are linearized are in fact abelian varieties, arising as Jacobians of the spectral curves of the reflection monodromy matrix.  

\bigskip

The plan of the paper is as follows.  In section two, we introduce the phase space of the model, and recall how to construct the Poisson commutative subalgebra of reflection Hamiltonians using the reflection monodromy matrix. We also write down the equations of motion generated by reflection Hamiltonians, which are shown to take the Lax form. 

 In the third section, we study the various spectral curves associated to the model, the holomorphic differentials on them, and the morphisms between them.
 
In the section four, we write down a system of log-canonical coordinates on the phase space generalizing Sklyanin's separation of variables for the periodic XXZ chain.  In the fifth section, we use these coordinates and the Hamilton-Jacobi method to integrate flow of the reflection Hamiltonians by quadratures. This construction also reveals the algebraic integrability of the system.  Section six contains the construction of the complex action-angle variables. Finally, in section seven we exploit the geometric description of the reflection flows to write formulas for the time evolution of the reflection monodromy matrix in terms of Riemann theta functions.

\section{The integrable system.}
In this section we recall the construction of the classical XXZ spin chain with reflecting boundaries, as described in \cite{gus} and references therein.  To each site of the chain, we associate a copy of the $SL_2^*$, the Poisson Lie group dual to $SL_2$ with its standard Poisson-Lie structure. Explicitly, $\bC[SL_2^*]=\bC[e,f,k^{\pm1}]$, with the Poisson bracket given by
\begin{align}
\label{epb}
&\nonumber\{k,e\}= ke\\ & \{k,f\}=- kf \\ \nonumber & \{e,f\}=2(k^{2}-k^{-2})
\end{align}
The function $\omega=k^2+k^{-2}+ef$ is a Casimir element of the Poisson algebra $\bC[SL_2^*]$, and its generic level set $\Sigma_t=\{\omega=t\}$ is a two-dimensional symplectic leaf in the Poisson manifold $SL_2^*$. 

\bigskip

It is convenient to gather the generators $e,f,k$ into the $2\times2$ matrix Laurent polynomial
$$
L(z)=\left(\begin{array}{cc} zk-z^{-1}k^{-1} & e\\f&zk^{-1}-z^{-1}k  \end{array}\right)
$$
The Poisson brackets (\ref{epb}) can be recast in terms of $L(z)$ with the help of the classical trigonometric $r$-matrix
\begin{align}
\label{rmatrix}
r(z_1/z_2)&=\frac{1}{2(z_1^2-z_2^2)}\left[\begin{array}{cccc}z_1^2+z_2^2&0&0&0\\
						0&-(z_1^2+z_2^2)&4z_1z_2&0\\
						0&4z_1z_2&-(z_1^2+z_2^2)&0\\
						0&0&0&z_1^2+z_2^2
 \end{array}\right]\end{align}
We then have
\begin{align}
\label{rtt}
\{L_1(z_1),L_2(z_2)\}=[r_{12}(z_1/z_2),L_1(z_1)L_2(z_2)]
\end{align}
The $\bC[SL_2^*]\otimes\bC[z,z^{-1}]$-valued matrix $L(z)$ also satisfies
\begin{align}
\label{det}
\det L(z)=z^2+z^{-2}-\omega
\end{align}
as well as the identities
\begin{align}
\label{identities}
L(z)L(z^{-1})&=-\det L(z)\mathrm{Id}\\
L(z^{-1})^t&=-\sigma_2L(z)\sigma_2^{-1}\\
L(-z)&=-\sigma_3L(z)\sigma_3^{-1}
\end{align}
where 
$$
\sigma_2=\left(\begin{array}{cc}0&-1\\1&0\end{array}\right), \ \ \sigma_3=\left(\begin{array}{cc}1&0\\0&-1\end{array}\right)
$$
are the Pauli matrices.  

\bigskip

The phase space of the $N$-site spin chain is the $2N$-dimensional symplectic manifold
$$
M_{2N}=\Sigma_{t_1}\times\cdots\times\Sigma_{t_N}
$$ 
To write down the Hamiltonians of the integrable spin chain, we first fix the data of a diagonal solution of the reflection equation
\begin{equation*}
K(z)=\left(\begin{array}{cc}{\xi z-z^{-1}\xi^{-1}} &0\\0&{\xi z^{-1}-z\xi^{-1}} \end{array}\right)
\end{equation*}
together with an $N$-tuple of non-zero complex numbers $a_1,\ldots,a_N$.  We may then form the {\em reflection monodromy matrix}
\begin{align}\mT(z)=\left(\frac{1}{z-z^{-1}}\right)L_1(a_1z)&\cdots L_N(a_Nz)K(z)L_N(z/a_1)\cdots L_1(z/a_N)\\
&=:\left(\begin{array}{cc}A(z) & B(z)\\ C(z)& D(z)\end{array}\right)
\end{align}
From the symmetries (\ref{identities}) of $L(z)$, it follows that $\mT(z)$ satisfies
\begin{align}
\mT(-z)&=\sigma_3\mT(z)\sigma_3^{-1}\\
\mT(z^{-1})^t&=-\sigma_2\mT(z)\sigma_2^{-1}
\end{align}
From the formula (\ref{rtt}), one finds that matrix elements of $\mT(z)$ have the Poisson brackets
\begin{align}
\label{refpb}
\nonumber\{\mT_1(z_1)\otimes\mT_2(z_2)\}=[r_{12}&(z_1/z_2),\mT_1(z_1)\mT_2(z_2)]\\
& + \mT_1(z_1)r_{12}(z_1z_2)\mT_2(z_2)-\mT_2(z_2)r_{12}(z_1z_2)\mT_1(z_1)
\end{align}
Later, we will need the following explicit formulae for the Poisson brackets of matrix elements of $\mT(z)$:
\begin{align}
\label{exppb}
\{A(z_1),A(z_2)\}&= \frac{2}{z_1z_2-z_1^{-1}z_2^{-1}}\bigg(B(z_1)C(z_2)-C(z_1)B(z_2) \bigg)  \\
\{C(z_1),A(z_2)\}&= \frac{2z_1}{(z_2^2-z_1^2)(z_1^2z_2^2-1)}\bigg( z_1z_2^4C(z_1)A(z_2)-z_1^2z_2^3A(z_1)C(z_2)-\\ \nonumber&z_1^2z_2D(z_1)C(z_2)+z_2^3D(z_1)C(z_2)-z_1C(z_1)A(z_2)+z_2A(z_1)C(z_2)\bigg)
\end{align}
Note that unlike in the periodic case, the functions $A(z)$ do not form a Poisson commutative family.  

\bigskip

  The {\em reflection transfer matrix} is the Laurent polynomial $t(z)$ defined by
 $$
 t(z)=\frac{1}{2}\mathrm{tr} \ \mT(z)
 $$
 \begin{prop}
{\em (\cite{sklyaninref},\cite{gus})} The reflection transfer matrix satisfies $$\{t(z_1),t(z_2)\}=0$$
 and thus its coefficients generate a Poisson commutative subalgebra in $\bC[M_N]$.  
 \end{prop}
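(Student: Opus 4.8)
The plan is to obtain $\{t(z_1),t(z_2)\}$ by taking the full trace over both auxiliary spaces in the quadratic Poisson bracket (\ref{refpb}). Writing $\mathrm{tr}_{12}$ for $\mathrm{tr}_1\mathrm{tr}_2$ and using $t(z)=\tfrac12\mathrm{tr}\,\mT(z)$, I have
\begin{equation*}
\{t(z_1),t(z_2)\}=\frac{1}{4}\,\mathrm{tr}_{12}\{\mT_1(z_1)\otimes\mT_2(z_2)\},
\end{equation*}
so it suffices to show that the right-hand side of (\ref{refpb}) vanishes after applying $\mathrm{tr}_{12}$, which I do term by term.

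The first term is a commutator, and $\mathrm{tr}_{12}[r_{12}(z_1/z_2),\mT_1(z_1)\mT_2(z_2)]=0$ because the trace of a commutator vanishes by cyclicity. For the two remaining terms I use two facts: cyclicity of $\mathrm{tr}_{12}$ on $V_1\otimes V_2$, and the observation that $\mT_1(z_1)$ and $\mT_2(z_2)$ act on distinct tensor factors, so as matrices with commuting function entries they satisfy $\mT_1(z_1)\mT_2(z_2)=\mT_2(z_2)\mT_1(z_1)$. Cyclically moving the rightmost factor to the front in each term gives
\begin{align*}
\mathrm{tr}_{12}\big[\mT_1(z_1)\,r_{12}(z_1z_2)\,\mT_2(z_2)\big]&=\mathrm{tr}_{12}\big[\mT_1(z_1)\mT_2(z_2)\,r_{12}(z_1z_2)\big],\\
\mathrm{tr}_{12}\big[\mT_2(z_2)\,r_{12}(z_1z_2)\,\mT_1(z_1)\big]&=\mathrm{tr}_{12}\big[\mT_1(z_1)\mT_2(z_2)\,r_{12}(z_1z_2)\big],
\end{align*}
where in the first line I also used the commutativity just noted. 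The two expressions coincide, so their difference cancels. Hence all three contributions vanish and $\{t(z_1),t(z_2)\}=0$; Poisson commutativity of the coefficients of $t(z)$ then follows by expanding in powers of $z_1$ and $z_2$.

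The only genuine subtlety is the cancellation of the two boundary terms. In the general reflection equation algebra one typically needs unitarity or crossing symmetry of the $r$-matrix, together with a nontrivial right boundary matrix $K^+$, to make the analogous terms cancel; here the relation (\ref{refpb}) is already arranged so that the same factor $r_{12}(z_1z_2)$ appears in both extra terms, and pure cyclicity together with the commutativity of the two auxiliary-space monodromies suffices. To be safe I would verify this cancellation once in index notation, confirming that no hidden symmetry property of $r$ is being tacitly invoked.
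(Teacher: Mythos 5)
Your proof is correct, but note that the paper itself offers no proof of this proposition at all: it is quoted as a known result from \cite{sklyaninref} and \cite{gus}. In Sklyanin's original (quantum) setting the commutativity of the transfer matrices is genuinely delicate --- it uses the reflection equation for $K$, together with unitarity and crossing-type symmetries of the quantum $R$-matrix, precisely because the entries of the monodromy matrix do not commute; in \cite{gus} the statement is instead obtained structurally, as Poisson commutativity of (twisted) bi-invariant functions on the loop group. Your argument is the standard classical shortcut: once the bracket is packaged in the form (\ref{refpb}), with function-valued (hence mutually commuting) entries, the full trace $\mathrm{tr}_{12}$ kills the commutator term by cyclicity, and the two boundary terms both cyclically reorganize to $\mathrm{tr}_{12}\bigl[\mT_1(z_1)\mT_2(z_2)\,r_{12}(z_1z_2)\bigr]$ and cancel, exactly as you say --- no hidden symmetry of $r$ is needed beyond what is already built into (\ref{refpb}). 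Your closing caution is well placed, since in effect the hard work (where unitarity and the reflection equation for $K$ do enter) has been absorbed into deriving (\ref{refpb}) itself from (\ref{rtt}), which the paper also asserts without computation; so your proof is a correct and self-contained verification \emph{modulo} (\ref{refpb}). One cosmetic point: $r_{12}(z_1/z_2)$ and $r_{12}(z_1z_2)$ have poles at $z_1^2=z_2^2$ and $z_1^2z_2^2=1$, so the identity should be read at generic $(z_1,z_2)$; since $\{t(z_1),t(z_2)\}$ is a Laurent polynomial in both variables, generic vanishing gives identical vanishing, and Poisson commutativity of the coefficients $P_j$ follows by expansion just as you conclude.
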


   Let us describe some properties of the transfer matrix. Firstly, by the symmetries (\ref{identities}) of $\mT(z)$, we have
 $$
t(-z)=t(z), \ t(z^{-1})=-t(z)   
 $$
The transfer matrix $t(z)$ therefore a function of the variable $w=z^2$, which admits an expansion 
\begin{align}
t(z)=\frac{1}{2}\left(\frac{w+1}{w-1}\right)\left(P_{N}\left(\frac{w^N+w^{-N}}{2}\right)+P_{N-1}\left(\frac{w^{N-1}+w^{1-N}}{2}\right)+\cdots+P_0 \right)
\end{align}
Note also that
$$
t(z)=\frac{A(z)-A(z^{-1})}{2}
$$
with the function $A(z)$ taking the form
$$
A(z)=\frac{Pz^{2N+1}+\cdots-P^{-1}z^{-2 N-1}}{z-z^{-1}}
$$
where the leading coefficient
$$
P=\xi_-\prod_{j=1}^Nk_j^2
$$ 
is proportional to the deformed total $\sigma^z$-component of spin. The leading coefficient of the transfer matrix is
\begin{align*}
\frac{P_{N}}{2}&=\xi_-\prod_{j=1}^N(k_j)^2-\xi_-^{-1}\prod_{j=1}^N(k_j)^{-2}\\
&=P-P^{-1}
\end{align*}
The following lemma, giving a linear relation between the reflection Hamiltonians, is a simple consequence of formulas (\ref{det}) and (\ref{identities}).   
\begin{lem}
The reflection transfer matrix $t(z)$ satisfies 
$$
\lim_{z\rightarrow1} \ (z-z^{-1})t(z)=\sum_{j=0}^{N}P_{j}=\left({\xi_--\xi_-^{-1}}\right)\prod_{k=1}^N(\omega_k-a_k^2-a_k^{-2})
$$
\end{lem}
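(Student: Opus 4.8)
The plan is to establish the two equalities separately. For the first, I would work purely with the displayed expansion of $t(z)$ as a function of $w=z^2$. Since $w-1=z^2-1=z(z-z^{-1})$, one has $(z-z^{-1})\frac{w+1}{w-1}=\frac{z^2+1}{z}$, which tends to $2$ as $z\to1$, while each factor $\frac{w^j+w^{-j}}{2}$ tends to $1$. Multiplying the expansion by $(z-z^{-1})$ and passing to the limit therefore gives $\lim_{z\to1}(z-z^{-1})t(z)=\frac12\cdot2\cdot\sum_{j=0}^N P_j=\sum_{j=0}^N P_j$. This is a pure manipulation of the generating function and presents no difficulty.

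For the second equality I would recompute the same limit directly from the monodromy matrix. Setting $\widetilde{\mT}(z):=(z-z^{-1})\mT(z)=L_1(a_1z)\cdots L_N(a_Nz)K(z)L_N(z/a_N)\cdots L_1(z/a_1)$, the prefactor $\frac1{z-z^{-1}}$ is cleared, so $\widetilde{\mT}(z)$ is a genuine matrix Laurent polynomial, regular at $z=1$. Because $(z-z^{-1})t(z)=\frac12\,\mathrm{tr}\,\widetilde{\mT}(z)$, the limit in question equals $\frac12\,\mathrm{tr}\,\widetilde{\mT}(1)$, and the task reduces to evaluating this trace.

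The key observation is that at $z=1$ the boundary matrix degenerates to a scalar: both diagonal entries of $K(z)$ collapse to $\xi_--\xi_-^{-1}$, so $K(1)=(\xi_--\xi_-^{-1})\,\mathrm{Id}$. Pulling this scalar out leaves the product $L_1(a_1)\cdots L_N(a_N)L_N(a_N^{-1})\cdots L_1(a_1^{-1})$, in which the two factors attached to each site $j$ carry reciprocal spectral parameters $a_j$ and $a_j^{-1}$ and are nested symmetrically about the centre. I would then telescope from the innermost pair outward: by the identity $L(z)L(z^{-1})=-\det L(z)\,\mathrm{Id}$ of (\ref{identities}) combined with the determinant formula (\ref{det}), each pair collapses to the scalar $L_j(a_j)L_j(a_j^{-1})=-\det L_j(a_j)\,\mathrm{Id}=(\omega_j-a_j^2-a_j^{-2})\,\mathrm{Id}$. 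Since each such factor is central, successive pairs telescope without obstruction, giving $\widetilde{\mT}(1)=(\xi_--\xi_-^{-1})\prod_{j=1}^N(\omega_j-a_j^2-a_j^{-2})\,\mathrm{Id}$, and taking $\frac12\,\mathrm{tr}$ of the $2\times2$ identity yields exactly the claimed product.

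The only genuine subtlety, and the step I would be most careful about, is the collapse $K(1)=(\xi_--\xi_-^{-1})\,\mathrm{Id}$: it is precisely the centrality of the boundary matrix at $z=1$ that lets the interior $L$-factors fuse into determinants. Were $K(1)$ proportional to $\sigma_3$ instead of the identity, conjugation-invariance of the trace would force $\frac12\,\mathrm{tr}\,\widetilde{\mT}(1)$ to vanish and the telescoping to break down; so checking this degeneration, together with confirming that the two halves of $\mT(z)$ are indeed built from reciprocal spectral parameters site by site, is the heart of the argument. Everything else is the routine bookkeeping of the nested products.
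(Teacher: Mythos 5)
Your proof is correct and is essentially the paper's own argument: the paper disposes of the lemma as ``a simple consequence of formulas (\ref{det}) and (\ref{identities})'', which is precisely your telescoping of the site-matched pairs $L_j(a_j)L_j(a_j^{-1})=-\det L_j(a_j)\,\mathrm{Id}=(\omega_j-a_j^2-a_j^{-2})\,\mathrm{Id}$ once the scalar boundary matrix $K(1)=(\xi-\xi^{-1})\,\mathrm{Id}$ is pulled out, combined with the elementary limit of the prefactor in the $P_j$-expansion of $t(z)$. Note only that you have (correctly) read the right half of $\mT(z)$ as $L_N(z/a_N)\cdots L_1(z/a_1)$, with spectral parameters matched site by site; the paper's displayed definition writes $L_N(z/a_1)\cdots L_1(z/a_N)$, an apparent index typo, and your reading is the one under which the lemma and the paper's sketch actually hold.
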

We also have the following proposition, which shows that the functions $(P_{1},\ldots,P_{N})$ form a set of $N$ functionally independent Hamiltonians.  
\begin{prop}
For generic values of the constants $\xi,\omega_i,a_i$, the reflection Hamiltonians $P_{1},\ldots,P_{N}$ are functionally independent.  
\end{prop}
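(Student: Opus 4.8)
The plan is to reduce the statement to the non-vanishing of a single regular function and then to establish that non-vanishing by induction on the number of sites $N$, using a degeneration in one of the inhomogeneities. Functional independence of $P_1,\ldots,P_N$ at a generic point of $M_{2N}$ means precisely that the $1$-forms $dP_1,\ldots,dP_N$ are linearly independent there, i.e. that the Jacobian of the map $(P_1,\ldots,P_N)\colon M_{2N}\to\mathbb{C}^N$ has rank $N$ at a generic point. Using $(k_j,e_j)_{j=1}^N$ as local coordinates on the product of leaves (eliminating $f_j$ via $\omega_j=t_j$), each $P_i$ is a regular function of these coordinates and of the parameters $\xi,a_j,t_j$, so every $N\times N$ minor of the Jacobian is again a regular function of all these data, and the simultaneous vanishing of all such minors is a Zariski-closed condition on $M_{2N}\times(\mathbb{C}^\ast)^{N+1}\times\mathbb{C}^N$. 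Hence it suffices to exhibit a single choice of phase point and parameters at which one fixed minor $J$ is non-zero; equivalently, fixing a generic phase point, it suffices to show that $J\not\equiv 0$ as a function of the parameters.

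To produce such a value I would induct on $N$. For $N=1$ everything is explicit: $\mathcal{T}(z)$ is a product of three matrices, $P_1$ is computed directly, and $dP_1\neq 0$ on a dense open subset of the two-dimensional leaf $\Sigma_{t_1}$. For the inductive step, fix the data of sites $1,\ldots,N-1$ and the boundary matrix generically, and regard the chosen minor $J$ as a Laurent polynomial in the last inhomogeneity $a_N$. I would analyse its behaviour as $a_N\to\infty$: in this limit the $L$-factors carrying the inhomogeneity $a_N$ are dominated by their leading diagonal terms, of the shape $a_N z\,\mathrm{diag}(k_N,k_N^{-1})$ and its reflected counterpart, so that site $N$ effectively decouples. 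To leading order $\mathcal{T}(z)$ then factorises as an explicit $k_N$-dependent diagonal contribution together with the reflection monodromy matrix $\mathcal{T}^{(N-1)}(z)$ of the $(N-1)$-site chain on sites $1,\ldots,N-1$.

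Reading off the coefficients $P_1,\ldots,P_N$ from this factorisation, I expect that, up to an invertible triangular change of basis and known non-zero scalar factors, the lower Hamiltonians $P_1,\ldots,P_{N-1}$ degenerate to the Hamiltonians $P_1^{(N-1)},\ldots,P_{N-1}^{(N-1)}$ of the $(N-1)$-site chain (functions of the coordinates of sites $1,\ldots,N-1$ alone), while the top Hamiltonian $P_N$ contributes a genuinely new function of $k_N$, consistent with $P_N=2(\xi_-\prod_j k_j^2-\xi_-^{-1}\prod_j k_j^{-2})$. The Jacobian of $(P_1,\ldots,P_N)$ thus becomes block triangular in the limit: its top-left $(N-1)\times(N-1)$ block is the Jacobian of the $(N-1)$-site Hamiltonians, of rank $N-1$ by the inductive hypothesis, and $P_N$ supplies one further direction transverse to the coordinates of the first $N-1$ sites. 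Consequently the leading coefficient in $a_N$ of a suitable minor $J$ is non-zero, so $J\not\equiv 0$, and therefore $J$ is non-zero for generic parameters and generic phase point, which proves the proposition.

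The main obstacle is the precise bookkeeping in the decoupling limit. Because the diagonal site-$N$ contribution shifts the degree of $\mathcal{T}(z)$ in $z$, the coefficients $P_j$ of the $N$-site transfer matrix are not literally the $P_j^{(N-1)}$ but are mixed among themselves by this degree shift, and one must check that the mixing is by an invertible triangular matrix rather than collapsing the reduced Hamiltonians onto a lower-dimensional family. Verifying the transversality of the new $P_N$ direction, and that the subleading corrections in $a_N$ cannot conspire to lower the rank, is the technical heart of the argument; the parity symmetry $\mathcal{T}(-z)=\sigma_3\mathcal{T}(z)\sigma_3^{-1}$ and the explicit leading term of $A(z)$ recorded above should make this control manageable.
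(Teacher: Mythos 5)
Your opening reduction (independence is a Zariski-open condition, so it suffices to make one fixed $N\times N$ minor $J$ nonvanish at a single choice of point and parameters) is the same first move as the paper's, but after that your route diverges: the paper does not induct on $N$ at all. It specializes $\xi=a_i\equiv 1$, restricts to the locus $\{f_j=0\}$ where every Lax matrix is triangular, writes each $P_j$ as an explicit Laurent polynomial in $\tilde{k}_j=k_j^2$, and shows $\det[\partial P_i/\partial\tilde{k}_j]\not\equiv 0$ by isolating a single monomial that can only arise from the diagonal term of the determinant expansion, where it has coefficient $(-2)^{N(N+1)/2}$. Your induction-plus-degeneration scheme is a legitimately different strategy, but as written it contains a genuine error at its key step.

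The flaw is in the claimed leading-order factorization. The site-$N$ factors sandwich the \emph{diagonal} matrix $K(z)$, and their leading terms as $a_N\to\infty$ are themselves diagonal, so
\begin{align*}
L_N(a_Nz)\,K(z)\,L_N(z/a_N) &= -a_N^2\,\mathrm{diag}(k_N,k_N^{-1})\,K(z)\,\mathrm{diag}(k_N^{-1},k_N)+O(a_N)\\
&=-a_N^2\,K(z)+O(a_N),
\end{align*}
since diagonal matrices commute. Hence $\mathcal{T}(z)=-a_N^2\,\mathcal{T}^{(N-1)}(z)+O(a_N)$ with \emph{no} surviving $k_N$-dependent diagonal contribution; and even if the decoupling site sat at the outer ends of the product, its leading contribution would be a conjugation by $\mathrm{diag}(k_N,k_N^{-1})$, which is invisible in the trace and therefore in all the $P_j$. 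So after rescaling by $-a_N^{-2}$ the Hamiltonians $P_1,\ldots,P_{N-1}$ do degenerate cleanly to $P^{(N-1)}_1,\ldots,P^{(N-1)}_{N-1}$ (no triangular mixing — that worry of yours is unfounded), but $P_N/a_N^2\to 0$: the factorization itself never produces the transverse $N$-th direction your block-triangularity requires. The step can be repaired, but only by importing an independent input, namely the exact formula recorded in the paper, $P_N=2\left(\xi_-\prod_j k_j^2-\xi_-^{-1}\prod_j k_j^{-2}\right)$, which is of degree $0$ in $a_N$ and satisfies $\partial P_N/\partial\tilde{k}_N\neq 0$. Taking as columns the inductively chosen coordinates on sites $1,\ldots,N-1$ together with $\tilde{k}_N$, every entry in rows $1,\ldots,N-1$ has $a_N$-degree at most $2$ with degree-$2$ coefficient independent of all site-$N$ variables (so the last column has degree at most $1$ in those rows), while row $N$ has degree $0$; consequently the coefficient of $a_N^{2(N-1)}$ in $J$ equals $\pm\det\left[\partial P^{(N-1)}_i/\partial x_j\right]\cdot\partial P_N/\partial\tilde{k}_N$, every competing permutation term having degree at most $2N-3$. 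With that correction your induction closes, though it remains substantially longer than the paper's one-shot computation on the triangular locus.
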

\begin{proof}
Since the functional independence is an open condition, it suffices to consider the case $\xi=a_i\equiv1$.  We will prove the stronger statement that the $P_{1},\ldots,P_{N}$ remain independent when restricted to the $N$-dimensional subvariety of phase space cut out by $\{f_j=0|j=1,\ldots,N\}$. On this locus, the local Lax operators become upper triangular, so the reflection monodromy matrix becomes 
$$
t(z)=\left(\frac{z+z^{-1}}{z-z^{-1}}\right)\left(\prod_{j=1}^N(zk_j-z^{-1}k_j^{-1})^2+\prod_{j=1}^N(zk_j^{-1}-z^{-1}k_j)^2\right)
$$
Note that the reflection Hamiltonians $P_j$ are functions of the variables $\tilde{k}_j=k_j^2$: explicitly, for $1\leq j\leq N$, we have
$$
P_j=\sum_{r_i\in\{0,\pm1\},r_1+\cdots+r_k=j}\bigg(\prod_{i=1}^N(-2)^{\delta_{r_i,0}}(\tilde{k}_i^{r_i} +\tilde{k}_i^{-r_i} )\bigg)
$$  To verify their algebraic independence, it suffices to check that the Jacobian $J(\tilde{k})=\det\left[ \frac{\partial{P_i}}{\partial_{\tilde{k}_j}}\right]$ is not identically zero.  Indeed, by counting degrees one sees that the Laurent monomial $k_1^{N-1}k_2^{N-2}\cdots k_{N-1}$ can only be obtained from the diagonal term in the expansion of the determinant $J(\tilde{k})$, where it appears with coefficient $(-2)^{N(N+1)/2}$.  

\end{proof}
This proposition shows that the classical XXZ spin chain with reflecting boundary conditions is an integrable system. 
 Note that the reflection Hamiltonians can be written
$$
P_k=2^{2-\delta_{k,0}}\mathrm{Res}_{z=0}\left(   \frac{z-z^{-1}}{z+z^{-1}}\right)z^{-2k-1}t(z)dz
$$ 
Let us now write down the equations of motion generated by the $P_k$.  For this we need to introduce some notations.  Given any Laurent polynomial $$f(z)=\sum_{n\in\bZ}a_nz^n\in\bC[z,z^{-1}]$$ we may uniquely decompose $f$ as
$$
f=f^\sigma+ f^+
$$
where $f_\sigma$ satisfies $f^\sigma(z)=f^\sigma(z^{-1})$ and
 $f^+\in z\bC[z]$.  Let us also introduce the matrices
 $$
M^\sigma_k(z) =2^{2-\delta_{k,0}}\left(\left(   \frac{z-z^{-1}}{z+z^{-1}}\right)z^{-2k}\mT(z)\right)^\sigma
 $$
 $$
 M^+_k(z)=2^{1-\delta_{k,0}}\left(\left(   \frac{z-z^{-1}}{z+z^{-1}}\right)z^{-2k}\mT(z)\right)^+
 $$
Taking the trace over the first space in equation (\ref{refpb}), we find that the equations of motion take the following Lax form:
\begin{align}
\label{eom}
\frac{\partial}{\partial t_k}\mT(z):=\{\mT(z),P_k\}&=\left[ \mT(z),M^\sigma_k(z)\right]\\
&=\left[M^+_k(z), \mT(z)\right]
\end{align}
We therefore obtain the following corollary, which opens the door to studying the system using the algebro-geometric techniques explained in \cite{belokolos},\cite{babelon},\cite{harnad},\cite{RSTS} and references therein.  
\begin{cor}
\label{isospectral}
The spectrum of the reflection monodromy matrix $\mT(z)$ is preserved under the Hamiltonian flows of the reflection Hamiltonians.  In particular, the coefficients of the characteristic polynomial $\det\left(\zeta-\mT(z)\right)$ are invariant under these flows.  
\end{cor}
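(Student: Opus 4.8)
The plan is to deduce isospectrality directly from the Lax form of the equations of motion recorded in (\ref{eom}). Writing the flow as $\partial_{t_k}\mT(z) = [M^+_k(z),\mT(z)]$, the essential point is that the time derivative of $\mT(z)$ is a commutator, so the flow acts infinitesimally by conjugation and hence cannot change the eigenvalues of $\mT(z)$. Concretely, one expects the solution to take the form $\mT(z,t_k) = U(t_k)\mT(z,0)U(t_k)^{-1}$ for an appropriate $U$ solving $\dot U = M^+_k U$, but it is cleaner to argue at the level of spectral invariants without integrating this auxiliary equation.

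First I would verify that the trace power sums $\mathrm{tr}\,\mT(z)^n$ are conserved for every $n \geq 1$. Differentiating and using the Lax equation gives
$$\frac{\partial}{\partial t_k}\mathrm{tr}\,\mT(z)^n = n\,\mathrm{tr}\left(\mT(z)^{n-1}[M^+_k(z),\mT(z)]\right).$$
Expanding the commutator and applying the cyclicity of the trace, the two resulting terms $\mathrm{tr}\left(\mT(z)^n M^+_k(z)\right)$ cancel, so each $\mathrm{tr}\,\mT(z)^n$ is a constant of motion. The identical computation goes through verbatim with the alternative Lax matrix $-M^\sigma_k(z)$ appearing in (\ref{eom}).

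Next, since $\mT(z)$ is a $2\times 2$ matrix, its characteristic polynomial is $\zeta^2 - \mathrm{tr}\,\mT(z)\,\zeta + \det\mT(z)$, and I would express its coefficients in terms of the conserved power sums. The coefficient $\mathrm{tr}\,\mT(z)$ is conserved by the $n=1$ case, since the trace of a commutator vanishes, while
$$\det\mT(z) = \frac{1}{2}\left((\mathrm{tr}\,\mT(z))^2 - \mathrm{tr}\,\mT(z)^2\right)$$
is conserved because $\mathrm{tr}\,\mT(z)$ and $\mathrm{tr}\,\mT(z)^2$ both are. (In arbitrary matrix dimension one invokes Newton's identities to pass from the conserved power sums to the elementary symmetric functions, i.e. the characteristic coefficients.) This shows that both coefficients of $\det(\zeta-\mT(z))$ are preserved by each reflection flow.

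Finally, I would note that throughout $z$ is treated as a formal parameter: the Lax identity (\ref{eom}) is an equality of matrices with entries in $\bC[z,z^{-1}]$, so the computation above is an identity of Laurent polynomials in $z$, and the conserved quantities are precisely the coefficients, as functions on $M_{2N}$, of these Laurent polynomials. There is no serious obstacle here, as the entire content resides in the Lax form (\ref{eom}), which we are given. The only point requiring minor care is to confirm that the two expressions in (\ref{eom}) genuinely present $\partial_{t_k}\mT(z)$ as a commutator, so that the cancellation under the trace is legitimate; this is immediate from the antisymmetry of the commutator bracket.
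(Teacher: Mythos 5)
Your proposal is correct and follows essentially the same route as the paper: the paper presents Corollary \ref{isospectral} as an immediate consequence of the Lax form (\ref{eom}), and your argument simply spells out the standard implication (conservation of $\mathrm{tr}\,\mT(z)^n$ via cyclicity of the trace, then passage to the characteristic coefficients, which for a $2\times2$ matrix needs only $\mathrm{tr}\,\mT(z)$ and $\det\mT(z)$). The details you supply are exactly the ones the paper leaves implicit, so there is nothing to flag beyond noting that the entries of $\mT(z)$ are rational in $z$ (with possible poles only at $z=\pm1$) rather than honest Laurent polynomials, which changes nothing in the computation.
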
  
%

\section{Spectral curves. }

Motivated by Corollary \ref{isospectral}, we consider the invariant spectral curve \begin{align}
\label{zcurve}
M \ : \ \det\left(\zeta-\mT(z)\right)=0  
\end{align}
cut out of $\bC\times\bC^*$ by the characteristic polynomial of the reflection monodromy matrix $\mT(z)$.  More precisely, we shall work with the compact Riemann surface obtained by adding four points at infinity, two points over $z=0$ and another two points over $z=\infty$.  In what follows, we will use the notation $M$ to refer to this compact Riemann surface.  
Introducing 
\begin{align}
\label{ycoord}
y=\zeta-t(z)
\end{align}
we have
$$
y^2=t(z)^2-\det \mT(z)
$$
 By (\ref{det}), the coefficients of $\det \mT(z)$ are constant on a symplectic leaf, so that all degrees of freedom for the moduli of $M$ are in fact encoded by the transfer matrix $t(z)$ and its coefficients $\{P_j\}$.
 
 \bigskip
 
Let us introduce the notations $\lambda=z^2+z^{-2}$ and
$$
\mathcal{Q}(z)=t(z)^2-\det \mT(z)
$$
\begin{lem}
We have $\mathcal{Q}(z)=\mathcal{Q}_{2N}(\lambda)$ where $\mathcal{Q}_{2N}(\lambda)$ is a polynomial of degree $2N$ in $\lambda$.   
\end{lem}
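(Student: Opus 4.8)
The plan is to exploit the symmetries of $t(z)$ and $\mT(z)$ to see that $\mathcal{Q}$ descends to a rational function of $\lambda$, then to show that its only possible pole cancels, and finally to read off the degree from the behaviour at $z\to\infty$. From $t(-z)=t(z)$ and $t(z^{-1})=-t(z)$ we get $t(-z)^2=t(z^{-1})^2=t(z)^2$; taking determinants in the two symmetry relations for $\mT$ gives $\det\mT(-z)=\det\mT(z^{-1})=\det\mT(z)$. Hence $\mathcal{Q}(z)=t(z)^2-\det\mT(z)$ is invariant under $z\mapsto -z$ and $z\mapsto z^{-1}$, so it is a rational function of $\lambda=z^2+z^{-2}$. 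As a function of $z$ on $\mathbf{C}^*$ its only possible poles come from the prefactor $(z-z^{-1})^{-1}$ in $\mT(z)$, i.e. from $z=\pm1$; thus, as a function of $\lambda$, $\mathcal{Q}$ can only have a pole at $\lambda=2$. It remains to show that this pole is absent and that the resulting polynomial has degree $2N$.

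I would handle the pole --- the step I expect to be the main obstacle --- as follows. Set $\Pi(z):=(z-z^{-1})\mT(z)$, the matrix Laurent polynomial obtained by clearing the prefactor, so that $t=\tfrac{1}{2(z-z^{-1})}\mathrm{tr}\,\Pi$ and $\det\mT=(z-z^{-1})^{-2}\det\Pi$, and completing the square gives
\[
\mathcal{Q}(z)=\frac{\tfrac14(\mathrm{tr}\,\Pi(z))^2-\det\Pi(z)}{(z-z^{-1})^2}.
\]
I claim $\Pi(\pm1)$ is a scalar matrix. Indeed $K(\pm1)=\pm(\xi-\xi^{-1})\mathrm{Id}$ is scalar, and when $z=\pm1$ the remaining product of Lax factors collapses from the inside out: pairing each $L_j(a_jz)$ with its mirror factor $L_j(z/a_j)$ and using the identity $L(w)L(w^{-1})=-\det L(w)\,\mathrm{Id}$ from (\ref{identities}) repeatedly yields, at $z=1$,
\[
\Pi(1)=(\xi-\xi^{-1})(-1)^N\Big(\prod_{j=1}^N\det L(a_j)\Big)\mathrm{Id},
\]
and likewise at $z=-1$. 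For any scalar matrix the discriminant $(\mathrm{tr})^2-4\det$ vanishes, so the numerator $\tfrac14(\mathrm{tr}\,\Pi)^2-\det\Pi$ --- which, being invariant under $z\mapsto\pm z^{\pm1}$, is itself a polynomial in $\lambda$ --- vanishes at $\lambda=2$. Since $(z-z^{-1})^2=\lambda-2$ has a simple zero there, the factor $(\lambda-2)$ divides the numerator, the pole cancels, and $\mathcal{Q}=\mathcal{Q}_{2N}(\lambda)$ is a genuine polynomial in $\lambda$.

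Finally, the degree follows from leading-order asymptotics as $z\to\infty$. From the expansion of $t(z)$ recorded above, $t(z)\sim\tfrac{P-P^{-1}}{2}z^{2N}$, whence $t(z)^2\sim\tfrac14(P-P^{-1})^2\lambda^{2N}$. On the other hand $\det\mT=(z-z^{-1})^{-2}\det K\prod_j\det L(a_jz)\det L(z/a_j)$, and using (\ref{det}) together with $\det K(z)=\xi^2+\xi^{-2}-\lambda$ and $(z-z^{-1})^2=\lambda-2$ one finds each paired factor $\det L(a_jz)\det L(z/a_j)=\lambda^2+O(\lambda)$, so $\det\mT\sim-\lambda^{2N}$. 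Therefore $\mathcal{Q}_{2N}(\lambda)$ has degree $2N$ with leading coefficient $\tfrac14(P-P^{-1})^2+1=\tfrac14(P+P^{-1})^2$, which is nonzero for generic values of the parameters; this confirms the degree is exactly $2N$.
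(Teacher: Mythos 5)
The paper records this lemma without proof, so there is no author's argument to compare yours against line by line; judged on its own terms, your proof is correct and complete, and it is visibly the argument the paper has in mind, since every ingredient you use is set up in the surrounding text. The symmetry reduction ($t(-z)=t(z)$, $t(z^{-1})=-t(z)$, and $\det\mathcal{T}(-z)=\det\mathcal{T}(z^{-1})=\det\mathcal{T}(z)$ from the two conjugation identities) correctly shows that $\mathcal{Q}$ is a rational function of $\lambda$ whose only possible finite pole is at $\lambda=2$; clearing the prefactor and using scalarity of $\Pi(\pm1)$ is exactly the collapse mechanism behind the paper's Lemma 1 (the evaluation of $\lim_{z\to1}(z-z^{-1})t(z)$), and for a scalar $2\times2$ matrix the discriminant $\tfrac14(\mathrm{tr}\,\Pi)^2-\det\Pi$ indeed vanishes, so $(\lambda-2)$ divides the (polynomial-in-$\lambda$) numerator; finally your leading coefficient $\tfrac14(P+P^{-1})^2$ agrees with the paper's later labeling $(\lambda^{-N}y)(\infty_\pm)=\pm\tfrac12(P+P^{-1})$ of the points over $\lambda=\infty$, a reassuring consistency check. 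Two small remarks. First, your pairing of $L_j(a_jz)$ with the mirror factor $L_j(z/a_j)$ silently corrects what must be an index typo in the paper's displayed formula for $\mathcal{T}(z)$, which literally reads $L_N(z/a_1)\cdots L_1(z/a_N)$; with that literal ordering the innermost pair $L_N(a_Nz)L_N(z/a_1)$ does not collapse for generic $a_j$, and both Lemma 1 and the displayed formula for $Q$ confirm the intended ordering is $L_N(z/a_N)\cdots L_1(z/a_1)$, which is what you use. Second, the degree statement is, as you note, generic: the coefficient $\tfrac14(P+P^{-1})^2$ vanishes on the locus $P^2=-1$, so the degree is exactly $2N$ away from that divisor, which is all the paper requires.
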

This fact has the following geometric meaning.  Firstly, the curve $M$ is a 4-fold cover of a genus $N-1$ hyperelliptic curve
$$
\Gamma \ : \ y^2-\mathcal{Q}_{2N}(\lambda)=0
$$
 and a 2-fold cover of the intermediate genus $2N-1$ spectral curve
$$
\Sigma \ : \ y^2-\tilde{\mathcal{Q}}(w)=0
$$ 
where $w=z^2$ and $\mathcal{Q}(z)=\tilde{\mathcal{Q}}(w)=\mathcal{Q}_{2N}(\lambda)$. The projection $\pi:\Sigma\rightarrow \Gamma$ is given by $\lambda=w+w^{-1}$.  Note that $\Gamma=\Sigma/\tau$, where $\tau:\Sigma\rightarrow\Sigma$ is the involution $\tau(w,y)=(w^{-1},y)$. 

\bigskip

We now turn to the description of the holomorphic differentials on the various spectral curves.  The space $H^0(\Sigma,K)$ of holomorphic differentials on $\Sigma$ has dimension $g(\Sigma)=2N-1$.  We may decompose $H^0(\Sigma,K)$ into its $\pm1$ eigenspaces $V_\pm$ with respect to the induced action of the involution $\tau$.  Bases may be chosen as
$$
V_+=\text{span}\left\{\omega^+_j=\frac{(w-w^{-1})(w^j+w^{-j})}{y w}dw \   \bigg| \ 0\leq j\leq N-2\right\}
$$
$$
V_-=\text{span}\left\{\omega^-_k=\frac{(w^k+w^{-k})}{yw}dw \ \bigg|  \ 0\leq k\leq N-1\right\}
$$
The subspace $V_+$ coincides with $\pi^*H^0(\Gamma,K)$, and its elements may be regarded as holomorphic differentials on $\Gamma$. The following basis for $V_+$ will prove well adapted to the description of the flows of our chosen basis of reflection Hamiltonians $P_j$:
$$
\Omega_j=\left(\frac{w+1}{w-1}\right)\frac{(w^j+w^{-j}-2)}{8yw}dw \ , \ 1\leq j\leq N-1
$$
We will also need the following differential of the third kind
$$
\Omega_{N}=-(P+P^{-1})\left(\frac{w+1}{w-1}\right)\frac{(w^{N}+w^{-N}-2)}{2yw}dw \  
$$
which has simple poles at the two points $\infty_\pm$ lying over $\lambda=\infty$ and is regular elsewhere.  We shall label the points $\infty_\pm$ by
$$
(\lambda^{-N}y)(\infty_\pm)= \pm\left(\frac{P+P^{-1}}{2}\right)
$$
Observe that $\Omega_{N}$ is defined so as to have the normalization
$$
\mathrm{Res}_{\infty_+}\Omega_N=1=-\mathrm{Res}_{\infty_-}\Omega_N
$$

\section{Separation of variables.}
The next step in our analysis of the model is to find a system of local Darboux coordinates on the symplectic manifold $M_N$. To do this, we apply Sklyanin's method of (classical) separation of variables, as explained in \cite{sklyaninsov}.  

\bigskip

From the symmetries (\ref{identities}) of $\mT(z)$, we have that 
\begin{align}\label{mores}A(z^{-1})=-D(z), \ \ C(z^{-1})=C(z)\end{align}
\begin{align}
C(-z)=-C(z), \ \ A(-z)=A(z)
\end{align}
In view of the symmetries of $C(z)$, it is natural to consider
$$
\tilde{C}(z)=\frac{C(z)}{z+z^{-1}}
$$
which satisfies 
$$
\tilde C(z^{-1})=\tilde{C}(z), \ \ \tilde{C}(-z)=\tilde{C}(z)
$$
and is therefore a function of $\lambda$.  In fact, $\tilde{C}(\lambda)$ is a polynomial of degree $N-1$, and following Sklyanin \cite{sklyaninsov}, we may introduce coordinates $(\lambda_1,\ldots,\lambda_{N-1},Q)$ as its zeros and asymptotic as $\lambda\rightarrow\infty$:   
\begin{align}
\label{lambdacoords}
\tilde{C}(z)=Q\prod_{k=1}^{N-1}(\lambda-\lambda_k)
\end{align} 
Note that in order to obtain a well defined set of coordinates in this fashion one must specify a locally consistent ordering of the roots $\lambda_j$. However, the angle coordinates constructed in Section 5 will turn out to be independent of this choice of ordering.    Note also that the leading coefficient $Q$ is given by
$$
Q=\sum_{j=1}^Nf_j\left((k_j/a_j)\prod_{i>j}k_i^2\xi_- -(k_j/a_j)^{-1}\prod_{i>j}k_i^{-2}\xi_-^{-1}\right)
$$
We also introduce the corresponding multi-valued $w$-coordinates $$w_j+w_j^{-1}=\lambda_j$$  Observe that since when $C(z)$ vanishes the reflection monodromy matrix becomes upper triangular, the points $(w,\zeta)=(w_j, A(z_j^{\pm1}))$ where $z_j^2=w_j$ lie on the curve $\Sigma$, and the points $(\lambda,\zeta)=(\lambda_j, A(z_j^{\pm1}))$ lie on the curve $\Gamma$. \bigskip

Let us fix a particular branch of the equation $w+w^{-1}=\lambda $ to give us a locally defined set of functions $w_1,\ldots,w_{N-1}$. Again, the angle coordinates we construct will be independent of this choice.   We may then introduce a further $(N-1)$ local coordinates
\begin{align}
\label{zetacoords}
\zeta_k=A(w_k)
\end{align} 
In terms of the function $y$ defined by (\ref{ycoord}), we have
$$
y_j:=\zeta_j-t(z_j)=\frac{A(z_j)-D(z_j)}{2}
$$
which by (\ref{mores}) is independent of our choice of branch of $w$.  

\bigskip

We now have the following proposition, which is proved by direct calculation using formulae (\ref{exppb}) for the Poisson brackets of reflection monodromy matrix elements.  
\begin{prop}
The coordinates $(Q,w_1,\ldots,w_{N-1}\ ;\ P,\zeta_1,\ldots,\zeta_{N-1})$ are log-canonical: we have
\begin{align}
\label{logcanonical}
\{w_k,\zeta_j\}&=2\delta_{j,k}w_j\zeta_k, \ \ \{Q,P\}=2QP
\end{align} 
and the Poisson brackets of all other pairs of coordinates are zero.  
\end{prop}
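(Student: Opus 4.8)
The plan is to use Sklyanin's standard implicit-function device to reduce every bracket among the separated coordinates to the ``bare'' brackets of the monodromy entries evaluated on the locus where $C$ vanishes. Since the $\lambda_j$ are the roots of $\tilde{C}(\lambda)=Q\prod_k(\lambda-\lambda_k)$, differentiating the identity $\tilde{C}(\lambda_j)\equiv 0$ along the Hamiltonian vector field $\{\,\cdot\,,F\}$ of an arbitrary phase-space function $F$ yields the master formula
$$
\{\lambda_j,F\}=-\frac{\{\tilde{C}(\mu),F\}\big|_{\mu=\lambda_j}}{\tilde{C}'(\lambda_j)},
$$
where on the right $\mu$ is held fixed. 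Because $w_j$, and hence $z_j$, are functions of $\lambda_j$ alone (through $w_j+w_j^{-1}=\lambda_j$ and $z_j^2=w_j$), one has $\{\lambda_j,z_j\}=0$; this kills the chain-rule correction coming from the motion of the evaluation point in $\zeta_j=A(z_j)$ and reduces $\{\lambda_j,\zeta_k\}$ to $\{\lambda_j,A(z_2)\}\big|_{z_2=z_k}$ plus corrections that will be seen to vanish. Everything thus boils down to $\{C(z_1),C(z_2)\}$, $\{C(z_1),A(z_2)\}$ and $\{A(z_1),A(z_2)\}$ restricted to $C(z_j)=0$.

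The second step is to record these bare brackets at the zeros. From (\ref{exppb}), $\{A(z_1),A(z_2)\}$ is a combination of the products $B(z_i)C(z_j)$, so it vanishes identically once both $z_1=z_j$ and $z_2=z_k$ are zeros of $C$, which will give $\{\zeta_j,\zeta_k\}=0$ after checking that the chain-rule corrections, governed by $\{C(z_1),A(z_2)\}$ at a zero, are themselves proportional to $C$ at the second zero. For the cross brackets one sets $z_1=z_j$ in the second formula of (\ref{exppb}); the two terms carrying an explicit factor $C(z_1)$ drop, and the remainder collapses, after using $\lambda=z^2+z^{-2}$, to the compact expression
$$
\{C(z_1),A(z_2)\}\big|_{z_1=z_j}=2z_jz_2\,C(z_2)\left[\frac{D(z_j)}{z_j^2z_2^2-1}-\frac{A(z_j)}{z_2^2-z_j^2}\right],
$$
which is manifestly proportional to $C(z_2)$. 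Finally $\{C(z_1),C(z_2)\}$ is read off from (\ref{refpb}) and checked to lie in the ideal generated by $C(z_1)$ and $C(z_2)$, so that it vanishes at any pair of zeros; combined with the master formula this gives $\{\lambda_j,\lambda_k\}=0$, hence $\{w_j,w_k\}=0$.

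It remains to assemble the cross bracket $\{w_k,\zeta_j\}$. For $j\neq k$ the displayed expression, evaluated at $z_2=z_j$, carries the factor $C(z_j)=0$ with no compensating singularity (since $z_j^2\neq z_k^2$), so the bracket vanishes and we obtain the off-diagonal zeros. The diagonal entry is the delicate point: letting $z_2\to z_k$ produces a $0\cdot\infty$ indeterminacy, the zero coming from $C(z_2)$ and the pole from $(z_2^2-z_k^2)^{-1}$. Resolving it via $C(z_2)=C'(z_k)(z_2-z_k)+O((z_2-z_k)^2)$ extracts the finite value $-z_kA(z_k)C'(z_k)$; feeding this through the master formula together with the value of $\tilde{C}'(\lambda_k)$ obtained by differentiating $\tilde{C}(\lambda(z))=C(z)/(z+z^{-1})$ at $z=z_k$ gives $\{\lambda_k,\zeta_k\}=2\zeta_k(w_k-w_k^{-1})$, and converting by $dw_k/d\lambda_k=w_k^2/(w_k^2-1)$ yields exactly $\{w_k,\zeta_k\}=2w_k\zeta_k$. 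The brackets of the boundary variables $Q$ and $P$ are handled in the same vein, by extracting the leading coefficients as $z\to\infty$ in the same bare brackets: their centrality against the $w_j,\zeta_j$ follows because the relevant limits remain in the ideal of $C$, while $\{Q,P\}=2QP$ comes from the top-order part of $\{C(z_1),A(z_2)\}$.

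The main obstacle is the diagonal collision limit in the cross bracket. All of the vanishing brackets are forced by an explicit $C$-factor, but the single nonzero entry $\{w_k,\zeta_k\}$ emerges only as the residue of a simple pole colliding with a simple zero, so the finite answer is sensitive to the normalizations: one must correctly track the conversion factors relating the $z_k$-, $w_k$- and $\lambda_k$-coordinates and the relation between $C'(z_k)$ and $\tilde{C}'(\lambda_k)$, and invoke the symmetries (\ref{mores}) to keep the intermediate appearances of $D(z_k)$ under control. Once this one limit is organized, the remaining brackets are routine.
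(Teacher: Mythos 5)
Your proposal is correct and follows essentially the same route as the paper, which simply asserts the proposition is ``proved by direct calculation using formulae (\ref{exppb})'': you carry out exactly that calculation via Sklyanin's implicit-differentiation device, and your key intermediate steps check out --- the collapsed form of $\{C(z_1),A(z_2)\}|_{z_1=z_j}$, the limit value $-z_kA(z_k)C'(z_k)$ in the diagonal collision, and the conversion factors giving $\{\lambda_k,\zeta_k\}=2\zeta_k(w_k-w_k^{-1})$ and hence $\{w_k,\zeta_k\}=2w_k\zeta_k$ are all exactly right (and indeed $\{C(z_1),C(z_2)\}$ computed from (\ref{refpb}) vanishes identically, which is even stronger than the ideal membership you invoke). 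The only difference is that you supply the details the paper omits.
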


To summarize, we obtain a system of log-canonical coordinates consisting of the asymptotics $Q,P$ of $\tilde{C}(\lambda),A(z)$ respectively, together with a degree $(N-1)$ divisor $(w,\zeta)=(w_k,\zeta_k)$ on $\Sigma$ which projects onto the zero locus of the polynomial $\tilde{C}(\lambda)$.
%
%
%
%

\section{Linearization of flows and algebraic integrability.}
In this section we explain how to construct affine coordinates on the Liouville tori in $M_N$ cut out by the reflection Hamiltonians $\{P_j\}$, with respect to which the Hamiltonian flows of the $P_j$ correspond to linear motion with constant velocity.  To do this, we will use the Hamilton-Jacobi method; for further details, see \cite{arnold},\cite{babelon},\cite{harnad} and references therein.  

\bigskip

The first step is to use the log-canonical coordinates constructed in the previous section to write down a local expression for a primitive $\alpha$ for the symplectic form on $M_N$. We find
\begin{align}
\alpha&=\frac{\log P}{2Q}dQ+\frac{1}{2}\sum_k\log(\zeta_k)\frac{dw_k}{w_k}
\end{align}
We must now restrict $\alpha$ to the level sets of the reflection Hamiltonians $P_j$ and integrate in order to form the Hamilton-Jacobi action. The final step consists of differentiating with respect to the invariants $P_j$ to obtain the canonically conjugate angle variables $F_j$.  The action is given by
$$
S(Q,\lambda_1,\cdots,\lambda_{N-1},P_1,\ldots,P_N)=\frac{(\log P)(\log Q)}{2}+\frac{1}{2}\sum_{k=1}^{N-1}\int_{w_0}^{w_k}\log(\zeta)\frac{dw}{w}
$$
where the integral is understood as being taken on the spectral curve $\Sigma$.  We therefore find
\begin{align}
\label{abelianintegrals}
F_j&=\frac{\partial S}{\partial P_j} = \begin{dcases} \sum_{k=1}^{N-1}\int_{\lambda_0}^{\lambda_k}\Omega^+_j &   \ 1\leq j\leq N-1  \\ 
\frac{\log(Q)}{4(P+P^{-1})}- \frac{1}{4(P+P^{-1})}\sum_{k=1}^{N-1}\int_{\lambda_0}^{\lambda_k}\Omega_{N}& j=N\end{dcases}
\end{align}
where we may now regard the integrals as being taken on the genus $N-1$ curve $\Gamma$.  The symplectic form being written as 
$$
\omega=\sum_{k=1}^NdF_k\wedge dP_k
$$
the time evolution under the reflection flows becomes linear in these coordinates:
$$
F_j(t_k)=F_j(0)+t_k\delta_{jk}
$$
 Note that the coordinates $F_1,\ldots,F_{N-1} $ coincide with the Abel map applied to the degree $g(\Gamma)=N-1$ divisor 
$$
\mathcal{D}=p_1+\cdots+p_{N-1}
$$
on $\Gamma$, where we write $p_j$ for the point $(\lambda,y)=(\lambda_j,y_j)$.  Hence the reflection flow is linearized on the $\mathrm{Jac}(\Gamma)$, the Jacobian variety of $\Gamma$, which establishes the algebraic integrability of the system.

\section{Action-angle variables.  }
In this section we explain how to construct complex action-angle variables for the system.  Let us choose a canonical basis $$(A_1,\ldots, A_{N-1},B_1,\ldots,B_{N-1})$$ for $H_1(\Gamma_0,\bZ)$, where $\Gamma_0$ is some fixed spectral curve.  By Gauss-Manin, this choice of basis has a well defined propagation to a canonical homology basis for all nearby spectral curves $\Gamma$.    We will also need to introduce $\gamma:=A_N$ ,  a contractible loop on $\Sigma$ winding once around the point $\infty_+$. 

\bigskip

In order to define the action-angle variables, we must choose a lifting of $A_1,\ldots,A_N$ to homology classes $\tilde{A}_1,\ldots,\tilde{A}_N$ on the curve $\Sigma^c$ obtained by deleting slits between the branch points of the multi-valued function $\log(\zeta)$ on $\Sigma$.  On the cut Riemann surface $\Sigma^c$, we have a well-defined meromorphic differential
\begin{align}
\eta=\log(\zeta)\frac{dw}{w}
\end{align}
Then the action variables $J_1,\ldots,J_N$ are defined as the $A$-periods of the differential $\eta$:
\begin{align}
\label{actions}
J_k=\oint_{A_k}\eta, \ \ 1\leq k\leq N
\end{align}
A priori, this definition of the action variables depends on our choice of lifting of the homology classes $A_i$.  However, the following proposition shows that this dependence is of a tame nature.  

\begin{prop}
Let $\{J_k\}, \{J_k'\}$ be two sets of coordinates defined by formula (\ref{actions}) for two different choices of sets of lifts $\{\tilde{A}_k\},\{\tilde{A}_k'\}$ of the homology classes $\{[A_k]\}\subset H_1(\Gamma,\bZ)$, having the same winding numbers around $\infty_+$.  Then each difference $J_k-J_k'$ is a constant function on $M$, and the map 
\begin{align}
\label{coordchange}
(P_1,\ldots,P_N)\longmapsto (J_1,\ldots,J_N)
\end{align}
is a change of coordinates.  
\end{prop}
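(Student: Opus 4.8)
The plan is to reduce both assertions to a single computation of the derivatives $\partial J_k/\partial P_j$, exploiting the fact that differentiating in the Hamiltonians eliminates the logarithm and returns the meromorphic differentials $\Omega_j$ of Section 3. Since $\zeta$ enters $\eta=\log(\zeta)\frac{dw}{w}$ and depends on $P_j$ only through $t(w)$ — the coefficients of $\det\mT$ being Casimirs, hence $P$-independent — one has on $\Sigma$ that $\partial_{P_j}\zeta=\partial_{P_j}t\cdot\frac{\zeta}{y}$, so that
$$
\partial_{P_j}\eta=\frac{\partial_{P_j}\zeta}{\zeta}\,\frac{dw}{w}=\frac{\partial_{P_j}t}{y}\,\frac{dw}{w}.
$$
The logarithm has disappeared: the right-hand side is a single-valued meromorphic differential on $\Sigma$. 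Using that $t$ is linear in the $P_j$ together with the linear relation of the Lemma on $\sum_j P_j$ (which produces the shift by $-2$), this differential equals $2\Omega_j$ for $1\le j\le N-1$ and a nonzero multiple of the third-kind differential $\Omega_N$ for $j=N$. Consequently $\partial J_k/\partial P_j=\oint_{\tilde A_k}\partial_{P_j}\eta$ is, up to the harmless factor $2$, a period of $\Omega_j$ (resp. $\Omega_N$).

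For the constancy statement I would argue that these periods are insensitive to the choice of lift. The differentials $\Omega_1,\dots,\Omega_{N-1}$ lie in $V_+=\pi^*H^0(\Gamma,K)$, so $\oint_{\tilde A_k}\Omega_j=\oint_{A_k}\bar\Omega_j$ depends only on the projected class $[A_k]\in H_1(\Gamma,\bZ)$, which is common to both lifts; hence $\partial_{P_j}(J_k-J_k')=0$ for $j<N$. The form $\Omega_N$ is likewise $\tau$-invariant and descends to a third-kind differential on $\Gamma$, but its period now detects winding around the poles $\infty_\pm$: the difference of the two $\tilde A_k$-periods equals $2\pi i\,\mathrm{Res}_{\infty_+}\Omega_N$ times the difference of the winding numbers around $\infty_+$, which vanishes by hypothesis. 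Therefore $\partial_{P_j}(J_k-J_k')=0$ for all $j$, and since $J_k-J_k'$ depends on the point of $M$ only through the moduli $P$, the vanishing of all its $P_j$-derivatives shows it is constant on $M$.

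For the second statement I would show that the Jacobian $[\partial J_k/\partial P_j]$ is invertible. Taking $A_N=\gamma$ to be the loop around $\infty_+$, the matrix is block triangular: in the row $k=N$ the pairing $\oint_\gamma\Omega_j$ vanishes for $j<N$ (the $\Omega_j$ are holomorphic and $\gamma$ is contractible on $\Sigma$), while $\oint_\gamma\Omega_N=2\pi i\,\mathrm{Res}_{\infty_+}\Omega_N\neq0$ by the normalization of $\Omega_N$; the complementary $(N-1)\times(N-1)$ block is the matrix of $A_k$-periods ($1\le k\le N-1$) of the basis $\Omega_1,\dots,\Omega_{N-1}$ of $H^0(\Gamma,K)$, nonsingular by the Riemann bilinear relations. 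Hence the determinant is nonzero and the map $(P_1,\dots,P_N)\mapsto(J_1,\dots,J_N)$ is a local diffeomorphism.

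I expect the main obstacle to be the bookkeeping at $j=N$: verifying that $\partial_{P_j}\eta$ reproduces precisely the $\Omega_j$ (including the $-2$ shift coming from the Casimir relation, and the identification of $\Omega_N$ as a pullback of a third-kind differential whose sole relevant pole is $\infty_+$), and pinning down that the only lift-dependence is through the winding number around $\infty_+$ so that the hypothesis exactly kills it. By contrast the holomorphic block ($j<N$) is comparatively automatic once the descent $\Omega_j\in\pi^*H^0(\Gamma,K)$ is invoked.
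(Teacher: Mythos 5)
Your proposal is correct and follows essentially the same route as the paper: both identify $\partial J_k/\partial P_j$ with $A$-periods of the differentials $\Omega_j$ (your explicit computation of $\partial_{P_j}\eta$, including the $-2$ shift from the Casimir constraint, just fills in what the paper leaves implicit when it asserts $\partial J_i/\partial P_k=\oint_{A_i}\Omega_k$), then deduce constancy from the lift-independence of these periods and invertibility from the block-triangular structure coming from the residue of $\Omega_N$ at $\infty_+$ combined with the perfect pairing between $H^0(\Gamma,K)$ and the $A$-cycles. The difference is one of detail, not of method.
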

\begin{proof}
Let us first show that (\ref{coordchange}) is a change of coordinates.  For this, note that 
\begin{align}
\frac{\partial J_i}{\partial P_k}=\oint_{A_i}\Omega_k
\end{align}
Since the pairing between $H^0(\Gamma,K)$ and the span of the $A$-cycles is perfect, and $\Omega_N$ is the only differential of the $\Omega_j$ with nonzero residue at $\infty_+$, it follows that the Jacobian matrix of (\ref{coordchange}) is of full rank, which shows that $(\ref{coordchange})$ is a change of coordinates.  Now to prove the first assertion of proposition amounts to showing that 
$$\{F_k,(J_i-J_i')\}=\frac{\partial}{\partial P_k}(J_i-J_i')=0$$
for all $j,k$.  But  since the differentials $\Omega_1,\ldots,\Omega_{N-1}$ are well defined meromorphic differentials on $\Gamma$, and by definition $A_N,A'_N$ have the same winding number around $\infty_+$, we have
$$
\frac{\partial J_i}{\partial P_k}=\oint_{A_i}\Omega_k=\oint_{A_i'}\Omega_k=\frac{\partial J_i'}{\partial P_k}
$$
\end{proof}
Let us also note that , again up to a shift by an additive constant, the action variable $J_N$ is given by $J_N=2\pi i\log P$.   

\bigskip
With these results in hand we can proceed to the construction of the angle variables $\omega_k$ as the coordinates canonically conjugate to the $J_k$ by the Hamilton-Jacobi method: 
\begin{align}
\omega_k=\frac{\partial S}{\partial J_k}
\end{align}
Note that these coordinates are independent of our choices of representative for the homology classes $A_k$, and the differentials $d\omega_k$ are $\tau$-invariant and thus descend to the curve $\Gamma$.  
Moreover, for $1\leq k\leq N-1$ we have
\begin{align*}
\oint_{A_i}d\omega_k&=\frac{\partial}{\partial J_k}\oint_{A_i}{dS}\\
&=\frac{\partial}{\partial J_k}\oint_{A_i}{\left(\sum_r p_rdq_r+F_rdP_r\right)}\\
&=\frac{\partial}{\partial J_k}\oint_{A_i}{\alpha}\\
&=\delta_{ik}
\end{align*}
which shows that the angle variables are indeed normalized correctly with respect to the $A$-cycles of $\Gamma$,  and that all $A$-periods of the differential $d\omega_N$ vanish. Note that if $p\in\{\infty_\pm\}$, and $\gamma_{p}$ is a contractible loop in $\Gamma$ with winding number 1 around $p$, we also have
 \begin{align*}
\oint_{\gamma_p}d\omega_k&=\frac{\partial}{\partial J_k}\oint_{\gamma_p}{dS}\\
&=\frac{\partial}{\partial J_k}\oint_{\gamma_p}{\eta}\\
&=\pm\delta_{kN}
\end{align*}
which shows that the differentials $d\omega_1,\ldots,d\omega_{N-1}$ are holomorphic, and that
\begin{align*}
\mathrm{Res}_{\infty_+}d\omega_N=1=-\mathrm{Res}_{\infty_-}d\omega_N.
\end{align*}

\section{Solutions in theta functions.}
We will now apply the geometric description of the system given in the previous sections to write explicit formulas for the flows of the reflection Hamiltonians using Riemann theta functions. 

\bigskip

 Let $(A_i,B_i)$ be the canonical homology basis and $\{dw_j\}$ be the normalized abelian differentials constructed in the previous section. The {\em matrix of } $b$-{\em periods} corresponding to this data is the $(N-1)\times (N-1)$ symmetric matrix 
 
$$
\mathcal{B}_{jk}=\oint_{B_j}d\omega_k, \ \ 1\leq j,j\leq N-1
$$
 This matrix gives the rise to the model
$$
\mathrm{Jac}(\Gamma)=\bC^{N-1}/(\bZ^{N-1}+\mathcal{B}\bZ^{N-1})
$$
for the Jacobian of $\Gamma$.
Expanding $d\omega_j=\sum_k \mathcal{N}_{jk}\Omega_k$ where $\mathcal{N}_{jk}\in\bC$, we define the normalized angle variables $$\widetilde{F}_j=\sum_k \mathcal{N}_{jk}F_k, \ \ j=1,\ldots, N-1$$
$$\widetilde{F}_N=4(P+P^{-1})\left(\sum_{k=1}^{N} \mathcal{N}_{jk}F_k\right)$$ Note that $\mathcal{N}_{jN}=0$ for $j=1,\ldots,N-1$ and $\mathcal{N}_{NN}=1$ so that we have
$$
\widetilde{F}_N={\log Q}-\sum_{k=1}^{N-1}\int_{\lambda_0}^{\lambda_k}d\omega_{N}
$$ 
In these coordinates the time evolution takes the form
\begin{align}
\label{normalizedjac}
\widetilde{F}_i(t_k)=\widetilde{F}_i(0)+t_k\mathcal{N}_{ik}, \ \ i=1,\ldots, N-1
\end{align}
\begin{align}
\label{normalizedfn}
\widetilde{F}_N(t_k)=\widetilde{F}_N(0)+c_kt_k
\end{align}
where $c_k=4(P+P^{-1})\mathcal{N}_{Nk}$.  
If we define the normalized Abel map with base point $p_0$
\begin{align}
\label{basedabel}
\mathcal{A}_j(p_1+\ldots+p_{N-1})=\sum_{k=1}^{N-1}\int_{p_0}^{p_k}d\omega_j
\end{align}
we have
$$
\mathcal{A}(\mathcal{D}(t))=\mathcal{A}(\mathcal{D}(0))+t_kU^{(k)}
$$
where the velocity vector $U^{(k)}$ is given by
$$
U^{(k)}_j=\mathcal{N}_{jk}
$$
and
$$
\widetilde{F}_N(t)=\widetilde{F}_N(0)+c_kt_k.
$$
Let us now recall some background on theta functions.  For a more detailed discussion of this subject, see \cite{belokolos} and references therein.  The Riemann theta function associated to the spectral curve $\Gamma$ and its matrix of $b$-periods $\mathcal{B}$ is the following holomorphic function on $\bC^{N-1}$:
\begin{align}
\label{theta}
\theta(z)=\sum_{m\in\bZ^{N-1}}e^{2\pi i (m,z) +\pi i (\mathcal{B}m,m)}
\end{align}
The theta function is automorphic with respect to the lattice of periods of $\Gamma$: if $n\in\bZ^{N-1}$, we have
\begin{align}
\nonumber\theta(z+n)&=\theta(z)\\
\label{automorphy}\theta(z+\mathcal{B}n)&=\exp\left(-2\pi i(n,z)-\pi i (\mathcal{B}n,n) \right)\theta(z)
\end{align}
From these formulas, it follows that the divisor $\Theta$ of $\theta(z)$ is a well defined analytic subset of the Jacobian $\mathrm{Jac}(\Gamma)$.  
Let us fix a so-called {\em odd non-singular} point $e\in\Theta\subset\bC^{N-1}$ of the theta divisor.  Then the third kind differential $\widetilde{\Omega}_N$ can be expressed in terms of the odd theta function $\theta_e(z):=\theta(z+e)$ as
$$
\widetilde{\Omega}_N(p)=d\log\left(\frac{\theta_e(\int^{p}_{\infty_+}\omega)}{\theta_e(\int^{p}_{\infty_-}\omega)} \right)
$$ 
where we use the shorthand notation 
$$
\int^{p}_{q}\omega=\mA(p)-\mA(q)
$$
Hence from our formula (\ref{normalizedfn}) for the time evolution of $\widetilde{F}_N$, we obtain the following expression for the time evolution of the observable $Q$ under the Hamiltonian flow of the reflection Hamiltonian $P_k$:
\begin{align}
\label{Qevolution}
Q(t_k)=Q(0)e^{c_kt_k}\prod_{j=1}^{N-1}\frac{\te(\int_{p_j(t_k)}^{\infty_+}\omega)\te(\int_{p_j(0)}^{\infty_-}\omega)}{\te(\int_{p_j(0)}^{\infty_+}\omega)\te(\int_{p_j(t_k)}^{\infty_-}\omega)}
\end{align}
However this formula is of limited practical value, in that it requires knowledge of the points $p_k(t)$ for all times $t$, whereas all we know explicitly is the (linear) time evolution of $\mA(\mathcal{D}(t))$. We may remedy this defect as follows.  Let $K$ denote the Riemann point for the based Abel map (\ref{basedabel}).  Consider two non-special effective degree $g=N-1$ divisors
$$
\mathcal{D}=p_1+\cdots+p_g, \ \ \mathcal{D'}=q_1+\cdots+q_g
$$
and form the meromorphic function
$$
m(p)=\prod^g_{j=1}\frac{\te(\int^p_{p_j}\omega)}{\te(\int^p_{q_j}\omega)}\cdot\frac{\theta(\mA(p)-\mA(D')-K)}{\theta(\mA(p)-\mA(D)-K)}
$$
which must be constant since it has no poles.  We therefore obtain, for any point $q$ on the curve,
$$
\prod^g_{j=1}\frac{\te(\int^p_{p_j}\omega)\te(\int^q_{q_j}\omega)}{\te(\int^p_{q_j}\omega)\te(\int^q_{p_j}\omega)}=\frac{\theta(\mA(p)-\mA(D)-K)}{\theta(\mA(p)-\mA(D')-K)}\frac{\theta(\mA(q)-\mA(D')-K)}{\theta(\mA(q)-\mA(D)-K)}
$$
Applying this formula to in the case $\mathcal{D}=p_1(t)+\cdots+p_g(t), \mathcal{D}'=p_1(0)+\cdots+p_g(0)$, $p=\infty_+, q=\infty_-$, we find
\begin{align*}
Q(t_k)&=Q(0)e^{c_kt_k}\frac{\theta(\mA(\infty_+)-\mA(\mathcal{D}(t))-K)}{\theta(\mA(\infty_+)-\mA(\mathcal{D}(0))-K)}\frac{\theta(\mA(\infty_-)-\mA(\mathcal{D}(0))-K)}{\theta(\mA(\infty_-)-\mA(\mathcal{D}(t))-K)}\\
&=Q(0)e^{c_kt_k}\frac{\theta(\mA(\infty_+)-\mathcal{A}(\mathcal{D}(0))-t_kU^{(k)}-K)\theta(\mA(\infty_-)-\mA(\mathcal{D}(0))-K)}{\theta(\mA(\infty_-)-\mathcal{A}(\mathcal{D}(0))-t_kU^{(k)}-K)\theta(\mA(\infty_+)-\mA(\mathcal{D}(0))-K)}
\end{align*}
which is an explicit formula for the time evolution of $Q$.  

\bigskip

We now turn to the problem of reconstructing the full reflection monodromy matrix.  For this, we introduce the following meromorphic function on $\Gamma$:
\begin{align}
\rho&=\frac{Q(z+z^{-1})}{P+P^{-1}}\cdot\frac{y+h(\lambda)}{C(z)}\\
&=\frac{1}{P+P^{-1}}\cdot\frac{y+h(\lambda)}{(\lambda+2)\prod_{k=1}^{N-1}(\lambda-\lambda_k)}
\end{align}
where we write $$h(\lambda)=\frac{A(z)-D(z)}{2}.$$ 
The relevance of the function $\rho$ to our problem is that the vector
$$
\psi=\left(1,\frac{(P+P^{-1})}{Q(z+z^{-1})}\cdot\rho\right)^t
$$
spans the eigenspace of $\mT(z)$ corresponding to the given point on the spectral curve.  We have the following proposition characterizing the function $\rho$.   
\begin{prop}
The meromorphic function $\rho$ has exactly $N$ poles, $N-1$ of them at the divisor $\mathcal{D}$, and one at the point $q^+=(-2,h(-2))$ lying over $\lambda=-2$.  In addition, $\rho$ has a zero at $\infty_-$.  Its value at $\infty_+$ is 
$$
\rho(\infty_+)=1
$$
\end{prop}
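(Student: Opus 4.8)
The plan is to read off all four assertions directly from the displayed expression
$$
\rho = \frac{1}{P+P^{-1}}\cdot\frac{y+h(\lambda)}{(\lambda+2)\prod_{k=1}^{N-1}(\lambda-\lambda_k)},
$$
which already presents $\rho$ as a rational function of the two coordinates $\lambda$ and $y$ on the hyperelliptic curve $\Gamma$, so that meromorphy is automatic. The single identity driving everything is
$$
y^2-h(\lambda)^2=B(z)C(z),
$$
obtained from $y^2=t(z)^2-\det\mT(z)$ together with $t=\tfrac12(A+D)$ and $h=\tfrac12(A-D)$; since $B$ and $C$ carry the same parities under $z\mapsto-z$ and $z\mapsto z^{-1}$, their product is a function of $\lambda$, so this is a genuine relation between functions on $\Gamma$.

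I would first locate the poles. The denominator vanishes precisely over the values $\lambda\in\{-2,\lambda_1,\dots,\lambda_{N-1}\}$, and I claim $C$ vanishes at each of them: at $\lambda=\lambda_k$ because $\tilde{C}=Q\prod_{k}(\lambda-\lambda_k)$ vanishes there, and at $\lambda=-2$ because $C(z)=(z+z^{-1})\tilde{C}(\lambda)$ and $z+z^{-1}=0$ when $\lambda=-2$. By the identity above this forces $\mathcal{Q}_{2N}(\lambda_0)=h(\lambda_0)^2$ at each such $\lambda_0$, so the two points of $\Gamma$ over $\lambda_0$ are $(\lambda_0,\pm h(\lambda_0))$. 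On the sheet $y=-h(\lambda_0)$ the numerator $y+h$ vanishes and cancels the simple zero of the denominator, while on the sheet $y=+h(\lambda_0)$ it equals $2h(\lambda_0)\neq0$, producing a simple pole. For $\lambda_0=\lambda_k$ the distinguished sheet is exactly the divisor point $p_k=(\lambda_k,y_k)$, since the separation-of-variables construction gives $y_k=\tfrac12(A(z_k)-D(z_k))=h(\lambda_k)$; for $\lambda_0=-2$ it is $q^+=(-2,h(-2))$. Since $y+h$ is holomorphic away from the points at infinity, this accounts for all finite poles, giving exactly the simple pole divisor $p_1+\cdots+p_{N-1}+q^+$.

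It then remains to treat the two points at infinity, and here I would use the asymptotics $A(z)\sim P\lambda^N$ and $D(z)\sim-P^{-1}\lambda^N$ as $\lambda\to\infty$, which make $h$ a polynomial of degree $N$ with leading term $\tfrac{P+P^{-1}}{2}\lambda^N$, matching the labeling $(\lambda^{-N}y)(\infty_\pm)=\pm\tfrac{P+P^{-1}}{2}$. As the denominator is monic of degree $N$ in $\lambda$, at $\infty_+$ one has $y+h\sim(P+P^{-1})\lambda^N$, whence $\rho(\infty_+)=1$; at $\infty_-$ the leading terms of $y$ and $h$ cancel, so $y+h=O(\lambda^{N-1})$ against a denominator of order $\lambda^N$, and $\rho$ vanishes at $\infty_-$. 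In particular there are no poles at infinity, confirming the total count of $N$ poles.

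The step I expect to be the real obstacle is the bookkeeping of sheets over the finite branch values: one must be sure that the zero of $y+h$ always lies on the sheet opposite to $p_k$ (respectively $q^+$), so that the pole genuinely survives there and is cancelled only at the conjugate point. This is exactly what the identity $y^2-h^2=BC$ controls, once fed the two facts that $y_k=h(\lambda_k)$ and that $C$ vanishes over $\lambda=-2$. A secondary point needing care is the leading-order cancellation of $y$ and $h$ at $\infty_-$, which relies on the leading coefficient of the degree-$N$ polynomial $h$ agreeing with the chosen orientation of the two points at infinity.
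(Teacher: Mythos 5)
Your proof is correct and follows essentially the same route as the paper: the paper's one-line proof rests on the identity $h^2(-2)=\mathcal{Q}_{2N}(-2)$, which is exactly your relation $y^2-h^2=BC$ specialized at $\lambda=-2$ (where the factor $z+z^{-1}$ in $C$ vanishes), and it leaves the simple poles along $\mathcal{D}$ and the asymptotics at $\infty_\pm$ implicit, precisely the parts you spell out via $y_k=h(\lambda_k)$ and the leading coefficient $\tfrac{P+P^{-1}}{2}$ of $h$. The only substantive item the paper has that you do not is the explicit evaluation $\mathcal{Q}_{2N}(-2)=\bigl(\tfrac{\xi-\xi^{-1}}{4}\bigr)\prod_{k=1}^N(\omega_k+a_k^2+a_k^{-2})^2$, a constant on the phase space that is nonzero for generic parameters, which certifies the nonvanishing $h(-2)\neq 0$ that you assert (as ``$2h(\lambda_0)\neq 0$'') but do not verify.
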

\begin{proof}
The assertion about the pole at $q^+$ follows from the identity $$h^2(-2)=\mathcal{Q}_{2N}(-2)=\left(\frac{\xi-\xi^{-1}}{4}\right)\prod_{k=1}^N(\omega_k+a_k^2+a_k^{-2})^2
$$
\end{proof}
By the Riemann-Roch theorem there is generically a unique such $\rho$, which can be written as
$$
{\rho}=\frac{{\te(\int^p_{\infty_-}\omega} )\theta(\mA(p)-\mA(D)-K+W)}{{\te(\int^p_{q^+}\omega} )\theta(\mA(p)-\mA(D)-K)}\cdot\frac{{\te(\int^{\infty_+}_{q^+}\omega} )\theta(\mA(\infty_+)-\mA(D)-K)}{{\te(\int^{\infty_+}_{\infty_-}\omega} )\theta(\mA(\infty_+)-\mA(D)-K+W)}
$$
where the vector $W$ is defined as the vector of $b$-periods of the unique normalized third kind differential $\Omega_{\infty_-,q^+}$  with residue 1 at $\infty_-$ and residue $-1$ at $q^+$:
$$
W_j=\oint_{B_j}\Omega_{\infty_-,q^+}
$$
Hence the time evolution of $\rho$ under the flow of the reflection Hamiltonian $P_k$ is given by the explicit formula
\begin{align*}
\rho(p,t_k)=&\frac{{\te(\int^p_{\infty_-}\omega} )\theta(\mA(p)-\mA(D(0))-t_kU^{(k)}-K+W)}{{\te(\int^p_{q^+}\omega} )\theta(\mA(p)-\mA(D(0))-t_kU^{(k)}-K)}\\
&\cdot\frac{{\te(\int^{\infty_+}_{q^+}\omega} )\theta(\mA(\infty_+)-\mA(D(0))-t_kU^{(k)}-K)}{{\te(\int^{\infty_+}_{\infty_-}\omega} )\theta(\mA(\infty_+)-\mA(D(0))-t_kU^{(k)}-K+W)}
\end{align*}
From this we can reconstruct the eigenvector $\psi(p)$ and therefore the full reflection monodromy matrix $\mT(z)$.  

\end{document}